%
%
%
%
%
%
%
\documentclass[%
reprint,
superscriptaddress,
frontmatterverbose,
 amsmath,amssymb,
prb,
]{revtex4-1}

\usepackage{amsfonts,amsmath,amssymb,amsthm}

\usepackage{graphicx}
\usepackage{dcolumn}
\usepackage{bm}



\usepackage{hyperref}

\hypersetup{
	pdfborder={0 0 0}
	pdfnewwindow=true,      
	colorlinks=true,       
	linkcolor=blue,          
	citecolor=blue,        
	filecolor=blue,      
	urlcolor=blue,          
}

\usepackage{color}
\usepackage{amsmath}

\usepackage{lineno}
\makeatother

\newcommand{\be}{\begin{equation}}
\newcommand{\ee}{\end{equation}}

\newcommand{\ii}{{\mathrm{i}}}

\newcommand{\overbar}[1]{\mkern 1.5mu\overline{\mkern-1.5mu#1\mkern-1.5mu}\mkern 1.5mu}
\def\ket#1{\mathinner{|{#1}\rangle}}

\newtheorem{cor}{Corollary}
\newtheorem{lem}{Lemma}
\newtheorem{dfn}{Definition}
\newtheorem{propdfn}{Proposition/definition}

\usepackage{braket}

\newif\ifdraft
\drafttrue
\draftfalse

\begin{document}


\title{Worldline algorithm by oracle-guided variational autoregressive network}

\author{Zhifang Shi}
\affiliation{International Center for Quantum Materials, School of Physics, Peking University, Beijing 100871, China}

\author{Yuchuang Cao}
\affiliation{International Center for Quantum Materials, School of Physics, Peking University, Beijing 100871, China}

\author{Qiangqiang Gu}
\affiliation{International Center for Quantum Materials, School of Physics, Peking University, Beijing 100871, China}

\author{Ji Feng}\email{jfeng11@pku.edu.cn}
\affiliation{International Center for Quantum Materials, School of Physics, Peking University, Beijing 100871, China}
\affiliation{Collaborative Innovation Center of Quantum Matter, Beijing 100871,
China}
\affiliation{CAS Center for Excellence in Topological Quantum Computation, University of Chinese Academy of Sciences, Beijing 100190, China}


%

\date{\today}

\begin{abstract}
The variational autoregressive network is extended to the Euclidean path integral representation of quantum partition function. An essential challenge is adapting the sequential process of sample generation by an autoregressive network to a nonlocal constraint due to the periodic boundary condition in path integral. An add-on oracle is devised for this purpose, which accurately identifies and stalls unviable configurations as soon as they occur. The oracle enables rejection-free sampling conforming to the  periodic boundary condition. As a demonstration, the oracle-guided autoregressive network is applied to obtain variational solutions of quantum spin chains at finite temperatures with relatively large system sizes and numbers of time slicing, and to efficiently compute thermodynamic quantities.

\end{abstract}

\keywords{Suggested keywords}
\maketitle

\section{Introduction}

The Euclidean path integral\cite{Feynman53} representation of a quantum mechanical partition function creates a bridge between quantum and classical statistical mechanics, and has engendered continued developments in the theory of quantum many-body systems. For a lattice Hamiltonian $H$, the imaginary time $\tau$-axis [0, $\beta$] can be discretized into slices separated by $\Delta\tau$ and the Boltzmann measure becomes a product of $e^{-\Delta \tau H}$ evaluated on the spacetime lattice, with a suitable boundary condition. This reduces the quantum statistical problem to a classical one in $(d+1)$ dimensions, whose configuration can often be represented conveniently by worldlines.  Numerical techniques, often referred to as worldline algorithms, have been developed to evaluate the partition function based 
often on the Monte Carlo method,~\cite{Suzuki76, Suzuki77, Hirsch82, Evertz03} which is especially powerful for boson and quantum spin problems.

Neural network quantum states ansatz for quantum many-body systems has been recently introduced,\cite{Carleo602} in which  many-body wave functions coded by a neural network are optimized by variational Monte Carlo. The neural network quantum states ansatz has subsequently been extended to quantum ground states\cite{Carleo602,Saito17,Deng17,Glasser18,Vieijra19, Zheng19} and low lying excited states,\cite{Choo18} as well as open quantum many-body dynamics.\cite{Yoshioka19, Nagy19, Hartmann19, Vicentini19} Many of these approaches rely on Monte Carlo sampling, which then are subject to some of the inherent limitations of Monte Carlo algorithms, such as the lack of a universal cluster update scheme, critical slowing down and non-parallelizability of a Markov chain.

Recently, Wu \textit{et al} proposed a variational approach based on an autoregressive network to compute the partition function of a classical many-body system.\cite{Wu19} In this approach, the  distribution function of a classical system is factorized into a product of conditional probabilities that can be processed and optimized by an autoregressive network.
This variational autoregressive network (VAN) approach to classical statistical mechanics is free of the Markov chains, and consequently, highly parallelizable.
Furthermore, because its sample generation carries minimal correlation, the VAN method may avoid the critical slowing down.
VAN (or other generative models) has been adapted in some of quantum many-body problems mentioned above, including solving ground state wavefunctions,~\cite{Sharir20,Luo19,Hermann20,Choo19,carrasquilla2019,Hibat-Allah20,Morawetz21} simulating open quantum many-body dynamics,~\cite{Luo20} and solving variational density matrices.~\cite{Carrasquilla19,Liu21}

In this paper, we report an investigation of the VAN approach to quantum statistical problems.
Although VAN has been proved successful in solving ground state wavefunctions,~\cite{Sharir20,Luo19,Hermann20,Choo19,carrasquilla2019,Hibat-Allah20,Morawetz21} adapting VAN to path-integral Monte Carlo scheme to solve quantum statistical problems remains to be explored. For a quantum statistical problem at finite temperatures,  an essential obstacle to adapting VAN to a path integral  is the constraint that the worldlines must be periodic in $\tau$. This nonlocal constraint is not easily expressible in an autoregressive network, which parses and predicts configurations consecutively on a spacetime lattice. Here, this problem is overcome by an add-on oracle, which efficiently identifies and stalls dead ends among intermediate configurations in the sequential sample generation process. The oversight by the oracle ensures rejection-free sampling of the spacetime configurations, which are then parsed and optimized by the VAN to reach thermal equilibrium.  We illustrate our method using the $XXZ$ quantum spin chain model. It is found that thermodynamic quantities can be computed efficiently and accurately for antiferromagntic and $XY$-like regimes, establishing a technique for exploring problems in quantum many-body systems at finite temperatures.

\section{worldline algorithm by VAN}
Our variational autoregressive network to quantum statistical problems in the worldline representation is described in this section. We begin with a review of how the $XXZ$ quantum spin model is mapped to a classical one, which leads to the worldline representation of the quantum partition function. Then the VAN approach to this effective classical spin model is introduced. In particular, an oracle is devised to monitor the sequential and multi-pass process of sample generation by the VAN, in order to impose the periodic boundary condition in the imaginary time direction. Last, the action rescaling technique is introduced, which is shown to remedy the mode collapse problem.

\subsection{Path integral by checkerboard decomposition}
For concreteness, we consider a quantum spin-1/2 model on a 1-dimensional lattice with $N$ (even) sites, and the Hamiltonian involves only interactions between spin connected by nearest-neighbor bonds.
The $XXZ$ model with a Zeeman field $B$ 
\begin{equation}
	H=J\sum_{i} s_i^x s_{i+1}^x+s_i^y s_{i+1}^y +\lambda s_i^z s_{i+1}^z -B\sum_is_i^z
	\label{eq:xxz}
\end{equation}
breaks the $SU(2)$ symmetry of the Heisenberg model down to the $U(1)$ group, corresponding to rotation of every spin around the $z$-axis. Here  the in-plane coupling $J$ is used as the natural energy units,  and $\lambda=J_z/J$ is then the anisotropy factor. Periodic boundary condition is assumed. The spin operators $s_i^\alpha\; (\alpha=x,y,z)$ satisfy the commutation relations,
$
	[s_i^\alpha, s_{j}^\beta] = \ii \delta_{ij}\epsilon^{\alpha\beta\gamma} s ^\gamma.
$
The $U(1)$ symmetry means that the $z$-component of the total spin, $S^z=\sum s_i^z$, is conserved. To compute the partition function, $Z=\operatorname{Tr} e^{-\beta H}$, a path-integral representation is often used. Upon discretization of the imaginary time, the problem becomes a classic spin problem in 1+1 dimension.~\cite{Suzuki76, Suzuki77} We now briefly describe the process.

In the checkerboard decomposition,~\cite{Suzuki76, Suzuki77} 
$H$ can be divided into two separately commuting parts, involving even and odd bonds only, 
\begin{equation}
	H_{\text{even/odd}}= \sum_{i=\text{even/odd}} s_i^x s_{i+1}^x+s_i^y s_{i+1}^y +\lambda s_i^z s_{i+1}^z -h s_i^z,
\end{equation}
with $h$ is a Zeeman field in $z$-direction.
The partition function 
$
	Z=\operatorname{Tr} e^{-\beta H}
$
can be written as
\begin{equation}
	Z= \lim_{M\rightarrow\infty}\operatorname{Tr} \left[e^{-\Delta \tau(H_{\text{even}} + H_{\text {odd}})} \right]^M.
	\label{eq:3}
\end{equation}
$2M$ is the number of slices on the imaginary time axis, and $\Delta\tau = \beta/M$. Keeping $M$ finite and inserting the complete sets of basis that diagonalize $s^z$,  we find~\cite{Suzuki77,Evertz03}

\begin{equation}
\begin{aligned}
Z &= \sum\limits_{i_{1},\cdots,i_{2M}} \bra{i_1}e^{-\Delta\tau H_{\text{even}}} \ket{i_{2M}} \bra{i_{2M}}e^{-\Delta\tau H_{\text{odd}}} \ket{i_{2M-1}}  \\
&\qquad \qquad \qquad \cdots \bra{i_3}e^{-\Delta\tau H_{\text{even}}} \ket{i_2} \bra{i_2}e^{-\Delta\tau H_{\text{odd}}} \ket{i_1} \\
&= \sum_c S[c]=\sum_c \prod_{\pi_c} w(\pi_c)
\end{aligned}
\end{equation}

where $c=\{s(m,n)|m=0,\cdots,2M-1,n=0,\cdots,N-1\}$ is the spin configuration on the spacetime lattice and $w(\pi_c)$ is the plaquette weight to be discussed shortly.

\begin{figure}[ht]
	\includegraphics[width=80mm]{./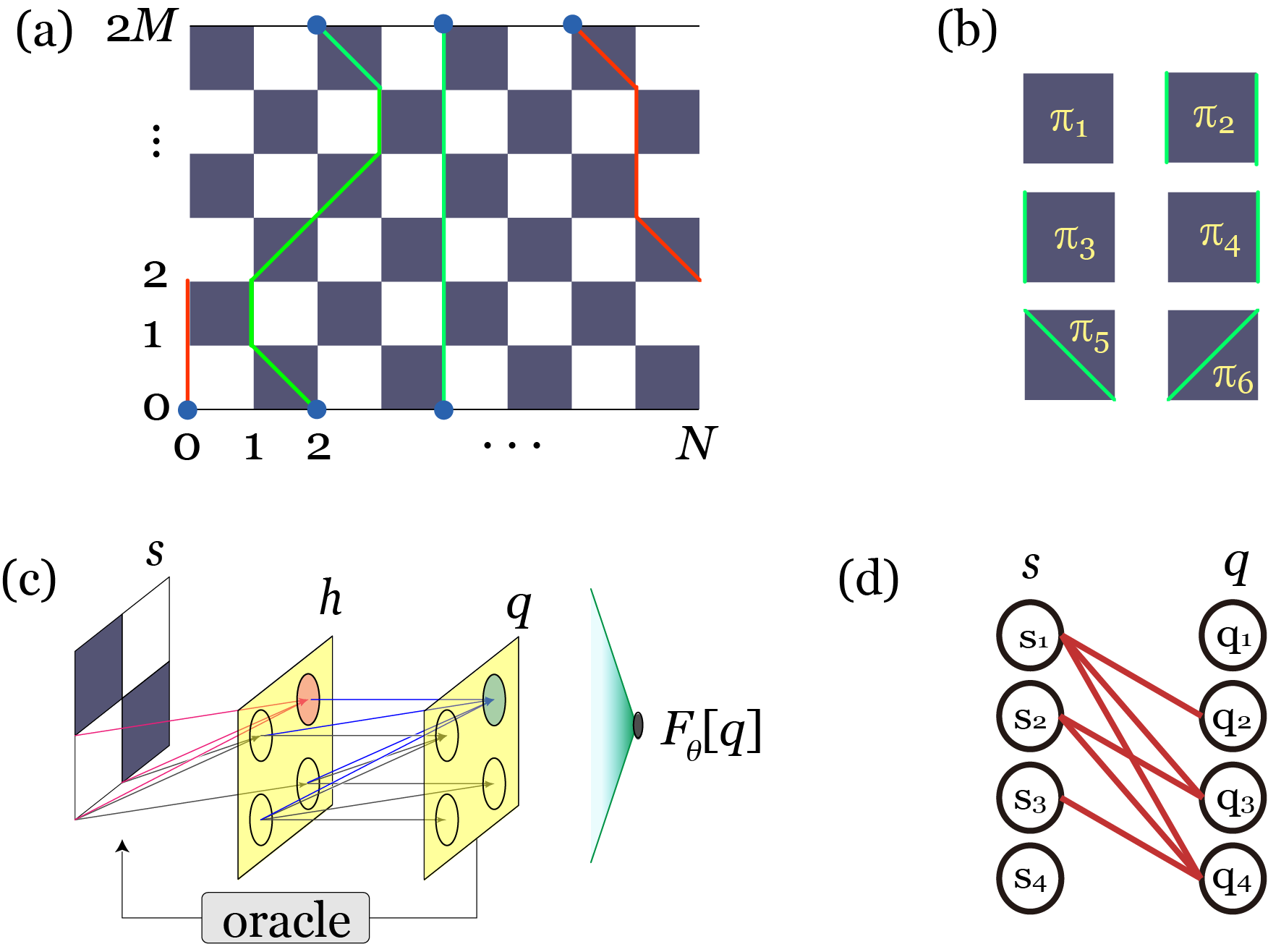}
	\caption{\label{fig:fig01}
		(a) The checkerboard representation of the quantum spin model. The green lines connecting blue dots at $\tau=0$ and $\beta$ are loops. The red line is an example of a worldline violating periodic boundary condition. (b) The plaquettes arising in the $XXZ$ model. (c) A schematic of the oracle-guided VAN with $M=1, N=2$. $s$ is a configuration to be generated, $h$ is a hidden layer, and $q$ is the last layer of neurons which produce $q_\theta$.  
		Each layer of the artificial network contains $2M\times N$ array of neurons. In the autoregressive architecture, succeeding layers are densely but not fully connected, which evidently mirrors the structure of the right-hand side in Eq. (\ref{eq:qtheta}). (d) The autoregressive connectivity between $s$ and $q$ layers, without showing the hidden layer. Here, the neurons representing spins and  $q_i$'s are shown as 1-dimensional arrays, though they correspond to the same network as in (c).
	}
\end{figure}

As a result of the checkerboard decomposition, the spacetime is partitioned into a $2M\times N$ checkerboard, as shown in Fig. \ref{fig:fig01}(a). For each configuration $c$, one can connect spin-up sites with non-intersecting and periodic paths, or loops, as illustrated in Fig. \ref{fig:fig01}(a). A loop starts at $\tau=0$ and ends at $\tau=\beta$ on the same lattice point, and owing to the spin conservation can only travel between adjacent time slices either vertically or diagonally across a dark square. Each spacetime lattice point the loops visit takes spin up state, whereas an unvisited lattice point is in the down spin state. A dark square decorated by the loop segments passing through along the vertical edges or the diagonals is referred to as a plaquette. Because of  the division of the $H$ into even and odd bond terms and the $U(1)$ symmetry, only the six types of plaquettes, as shown in Fig. \ref{fig:fig01}(b), need to be evaluated, whose weights are given by~\cite{Evertz03} 

\begin{eqnarray}
	\nonumber w(\pi_1) &=& e^{-\Delta\tau(\lambda -2h)/4},\\
	 w(\pi_2) &=& e^{-\Delta\tau(\lambda +2h)/4}, \\
	\nonumber w(\pi_3) =w(\pi_4) &=& e^{\Delta\tau \lambda /4} \cosh\tfrac{\Delta\tau}{2}, \\
	\nonumber w(\pi_5) =w(\pi_6) &=& e^{\Delta\tau \lambda /4} \sinh\tfrac{\Delta\tau}{2}.
\label{eq:w}
\end{eqnarray}

\subsection{Variational autoregressive network} 
Rewriting the partition function as $Z=\sum\exp(-\beta H_{\text{cl}})$, the spacetime action $S[c] = \beta H_{\text{cl}}$ leads to an energy function of the Gibbs measure, $H_{\text{cl}}$. Viewing $p[c]= e^{-\beta H_{\text{cl}}}/Z$ as a Gibbs distribution, the joint probability $p[c]$ can be factored into a product of successive conditional probabilities.\cite{Frey98, Goodfellow, Wu19} We now implement this factorization on an autoregressive network, as shown in Fig. \ref{fig:fig01}(c), 
\begin{equation}
	q_\theta(c) = \prod_{mn}q(s_{mn}=\uparrow|c_{mn}),
	\label{eq:qtheta}
\end{equation}
where $c_{mn}=\{s_{m'n'}|m'N+n'<mN+n\}$, and $q(s_{00}=\uparrow)=1/2.$\footnote{$q(s_{00}=\uparrow)=1/2$ if $Z_2$ symmetry is presented. otherwise $q(s_{00}=\uparrow)$ is determined by the network} The factorized $q_\theta(c)$ is optimized as close as possible to $p[c]$, by minimization of the free energy corresponding to $q_\theta(c)$. The network is not densely connected, as shown in Fig. \ref{fig:fig01}(c), as the neuron $q_i$ is only connected to neurons $s_{j<i}$. Because of this autoregressive connectivity, this network is referred to an \textit{autoregressive network}.\citep{Frey98,Goodfellow14,Uria16} As indicated in Fig. \ref{fig:fig01}(d), the overall connectivity between the spin configuration $s$ and $q_\theta$ is also autoregressive. 

In order to understand this viational process, we now describe the structure of VAN that factorizes $q_\theta(c)$. VANs are a type of generative model,\cite{Uria16,Wu19} which can generate samples with a variational distribution $q_\theta(c)$ parameterized by the autoregressive network. The model distribution $q_\theta(c)$ is optimized using the samples generated by VAN itself, which requires that the variational distribution $q_\theta(c)$ can be predicted from a given configuration $c$ and vice versa. Therefore we choose the \textit{autoregressive} network to parametrize $q_\theta(c)$. An illustration of the sample generation process by a simple autoregressive network, known as the fully visible sigmoid belief network,\citep{Frey98} is shown in Fig. \ref{fig:VAN}. The input layer on the left contains the sequentially generated sample $s=(s_1,s_2,s_3,s_4)$ as input. The output layer on the right produces conditional probabilities $q=(q_1,q_2,q_3,q_4)$, as shown in Eq. (\ref{eq:qtheta}), where $q_i$ corresponds to $q(s_{mn}=\uparrow|c_{mn})\in[0,1]$, the current predicted conditional probability  $s_{mn}=\uparrow$ is given the knowledge of spin states of all preceding sites. In each step, $q_i$ is determined by preceding spins $s_{j<i}$ (red neurons in $s$ layer), and one spin (blue neuron) $s_i$ is generated according to conditional probability $q_i$. A configuration $c$ is generated with a total of $2M\times N$ passages of the network in a row-major order over the spacetime lattice (Fig. \ref{fig:fig01}(a)). It should be remarked that only on the first row $(\tau=0)$ the spins are generated freely by the network. For all subsequent time slices, the spin conservation on each plaquette is coded into our VAN.

\begin{figure}[ht]
    \centering
	\includegraphics[width=80mm]{./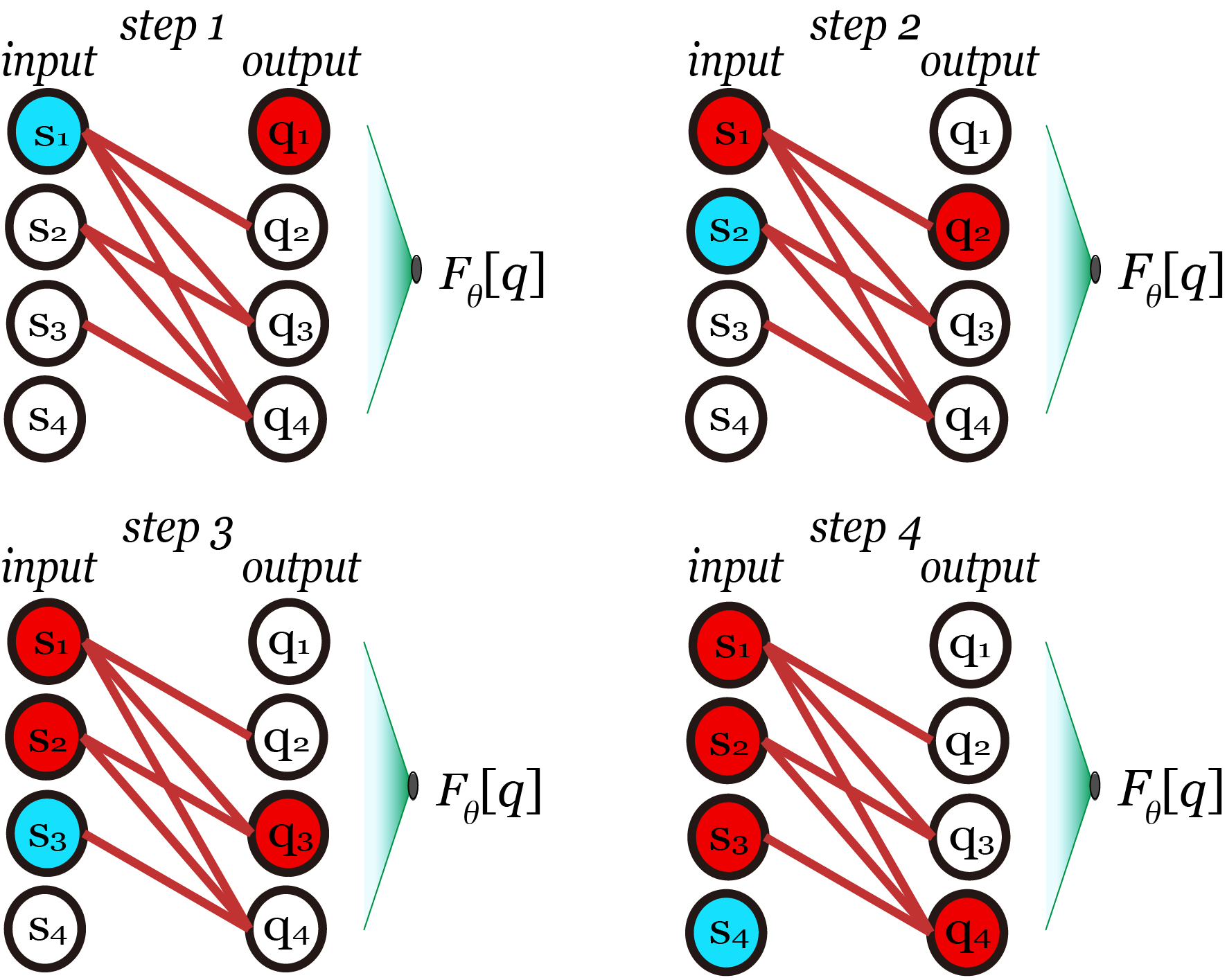}
	\caption{\label{fig:VAN}
	An illustration of how a VAN generates a configuration of 4 spins in 4 steps as discussed in the text. The input layer on the left is given the sequentially generated sample $s$ as input. The output layer on the right produces conditional probabilities $q$. In each step, $q_i$ is determined by preceding spins $s_{j<i}$ (red neurons in $s$ layer), and one spin (blue neuron) $s_i$ is generated according to conditional probability $q_i$.
	}
\end{figure}

Thus, the autoregressive network constitutes a surrogate model to produce the distribution for sampling the spin configuration, which can be used to compute the free energy
\begin{equation}
	F[q_\theta ]= \frac{1}{\beta}\sum_c q_\theta S[c]+ q_\theta \log q_\theta.
	\label{eq:fe}
\end{equation} 
The best approximant $q_\theta$(c)  to $p(c)$ is obtained by minimizing the free energy $F[q_\theta]$ with respect to the variation of network parameters $\theta$, in a standard batched stochastic minimization process.\cite{Goodfellow} This procedure formally describes a variational approach to the quantum statistical problem. 

\subsection{Oracle}
In conventional loop algorithms, a spacetime lattice configuration is subject to periodic updates in every Monte Carlo step~\cite{Suzuki76, Suzuki77, Hirsch82, Evertz03}   A key obstacle to the VAN for the quantum statistical problem outlined above lies with the fact that periodic boundary condition is a nonlocal constraint and consequently difficult to implement in a sequential process of sample generation. Thus a na\"ive implementation inevitably leads to a large number configurations violating the boundary condition, which we call the dead ends, and consequently has low sampling efficiency and staggeringly slow learning rate.
To overcome this problem, we introduce an add-on oracle\footnote{In context of computational complexity theory, an oracle is a black box attached to a Turing machine, which solves a certain decision problem.} to VAN, which is designed to predict and eliminate dead ends far before they occur, which we now describe.

\begin{figure}[ht]
	\includegraphics[width=80 mm, height=56.5 mm]{./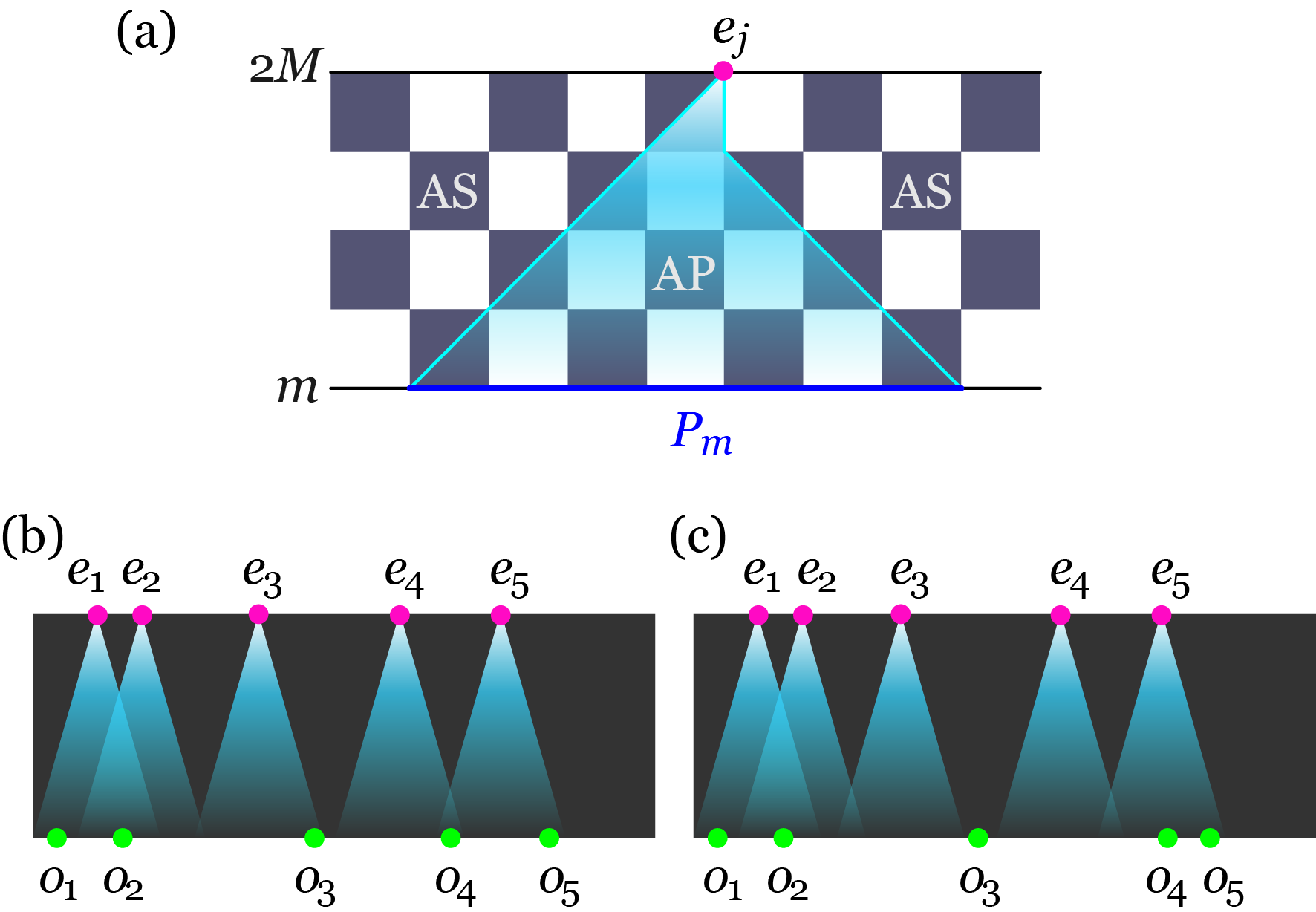}
	\caption{\label{fig:fig02}
	(a)A single lightcone emanating from one end point is shown. The region subtended by the lightcone (cyan lines) is in the absolute past (AP) of $e_j$, and everywhere else is absolutely separated (AS). 
	(b) and (c) show a spacetime configuration with five loops, with end points $e_j$, and points $o_j$ on a time slice at an intermediately step of VAN sampling. The dark regions are absolutely separated from all end points. 
	}
\end{figure}

At the beginning of a configuration generation, the spin configuration of time slice $\tau=0$ is decided stochastically by VAN, giving  a set of starting points of the loops which will be returned to at $\beta$. Call the set of end points $\{e_i=(x_j, \beta)\}$. For an end point $e_i$, the left- and right-most paths leading up to it form a {\it lightcone}, in the sense that all points bounded between this pair of paths can find a way to end at $e_i$ and are hence in the absolute past of $e_i$, and all other points are absolutely separated from $e_i$. The lightcone for a single end point is shown in Fig. \ref{fig:fig02}(a), of which the left- and right-most path are constructed by the greedy algorithm, heading diagonally across the dark squares. Shown in Fig. \ref{fig:fig02}(b) and (c) are examples with multiple end points and correspondingly multiple lightcones. We employ a width-first greedy algorithm\cite{Cormen01} to construct lightcones for every $e_i$, with which we can assess the viability of an intermediate configuration.

The oversight of VAN sampling by the oracle kicks in only for the last $
	m_s = \operatorname{min}(M+1, N/2+1)
$
steps, instead of at the very beginning of a sample generation. This is ensured by a lemma: a set of legit paths can always be found connecting $k$ pairs of grid points located separately on two time slices $N/2+1$ apart. The proof is elementary but lengthy, so we leave it to Appendix A.  Every time the spins on a time slice are created by the VAN, the oracle is queried for the viability of the current configuration by inspecting simply the location of spin up sites $o_j$ on current time slice, relative to the lightcones.
Shown in Fig. \ref{fig:fig02}(b) is a viable intermediate configuration where every intermediate point $o_j$ can be uniquely assigned to one of the lightcones emanating from $e_j$'s.  The situation of Fig. \ref{fig:fig02}(c) is considered unviable for two reasons: $o_3$ is absolutely separated from  all end points, and the lightcones of $e_3$ and $e_4$ are empty. An intermediate configuration deemed by the oracle to dead-end will be rejected right away. 

Incidentally, the oracle not only enables rejection-free sampling, it also alleviates partly the mode collapse problem commonly plaguing generative adversarial networks.\citep{Goodfellow14, Che17} Mode collapse happens when a VAN fails to be ergodic, only producing a small portion of configurations. For a comparison, we also perform a sampling without the oracle where dead ends are allowed, but with a large penalty to the loss function. As shown in Fig. \ref{fig:fig3}(a) for a 16-site antiferromagnetic ($J>0$) antiferromagnetic chain ($\beta |J|=1$), the penalty approach shows a sequence of plateauing in the variational process. In contrast, the oracle-guided VAN shows much faster convergence toward the exact free energy without visible plateaus. This is because in rejection-free sampling the network is better focused on the specialized learning goal it is designed for, whereas in the penalty method the early training is biased toward prioritizing eliminating the dead ends by generating ferromagnetic configurations.

\subsection{VAN sampling}

A worldline configuration $c$ is generated with a total of $2M\times N$ passages of the network in a row-major order. Only on the first row $(\tau=0)$ the spins are generated freely by the network; for all subsequent time slices, the continuity of worldline and periodic boundary condition along the imaginary time axis is coded into our VAN. This is done by masking the output conditional probabilities of VAN: if a random spin direction is deemed to dead end, it will be forced to another direction while the VAN output conditional probability of this spin is being masked. 

As $S^z=\sum s_i^z$ is conserved, the worldline in Fig. \ref{fig:fig01}(a) is continuous. This continuity is easy to implement
in the VAN sampling procedure. In the XXZ model case, the two spins on the top corners of a dark square are restricted by spin conservation: 
if bottom corners spins are $\ket{\uparrow\uparrow}$ / $\ket{\downarrow\downarrow}$, top corners spins must be $\ket{\uparrow\uparrow}$ / $\ket{\downarrow\downarrow}$; 
if bottom corners spins are $\ket{\uparrow\downarrow}$ or $\ket{\downarrow\uparrow}$, top corners spins can be either $\ket{\uparrow\downarrow}$ or $\ket{\downarrow\uparrow}$. In the latter case, only the first spin is randomly sampled by VAN, and the VAN output conditional probability for the second spin is masked($q_i$ forced to 0 or 1). 

As discussed earlier, an oracle is turned on for the last $\operatorname{min}(M, N/2)$ rows to prevent dead ends. Due to the space periodic boundary condition, a worldline configuration could have a nonzero winding number(here we define winding number as the number of times of worldlines cross the spatial boundary). When we sample the last $\operatorname{min}(N/2, M)$ rows, the $k$ worldlines of the last $\operatorname{min}(N/2, M)$ rows could have a winding number $w\in[-k,k]$, so the $k$ intermediate points $\{o_1,\cdots,o_k\}$ and the $k$ lightcones of end points $\{e_1,\cdots,e_k\}$ have $2k+1$ potential matches

\begin{equation} 
\begin{pmatrix}     e_1-N & \cdots &e_k-N\\  
    o_1 & \cdots & o_k \end{pmatrix}
,    
 \cdots , 
\begin{pmatrix}     e_1+N &\cdots & e_k+N\\  
    o_1 & \cdots & o_k \end{pmatrix}.\nonumber
\end{equation}
During the sampling procedure, all $2k+1$ matches are considered to ensure ergodic VAN sampling.

\subsection{Action rescaling}
As shown in Fig. \ref{fig:fig3}(a), the oracle alleviates partly the mode collapse problem. However, it is also observed that the oracle alone does not eliminate mode collapse  during the variational process completely. Therefore, we also introduce a action rescaling 
\begin{equation}
	S_{\text {rescaled}} = S_{\text{min}}+(1-r^{\text{epoch}})(S-S_{\text{min}}),
\end{equation}
in which $0<r<1$ is the strength of rescaling, $S_{\text{min}}$ is the smallest action value within the current epoch. This rescaling narrows the distribution of reward signal initially. This evens out the distribution of initial VAN predictions and keeps the samples from being attracted to locally plausible configurations.
 As shown in Fig. \ref{fig:fig3}(b), for a 32-site Antiferromagnetic chain with little action rescaling ($r=0.9$), mode collapse shows up in the oracle-guided VAN solution. But as soon as we  increase $r$, mode collapse disappears. We also remark that large $r$ value leads to slower early descent of the loss function, but better approach to the exact free energy later in the minimization.

\begin{figure}[ht]
	\includegraphics[width=80 mm, height=38.5 mm]{./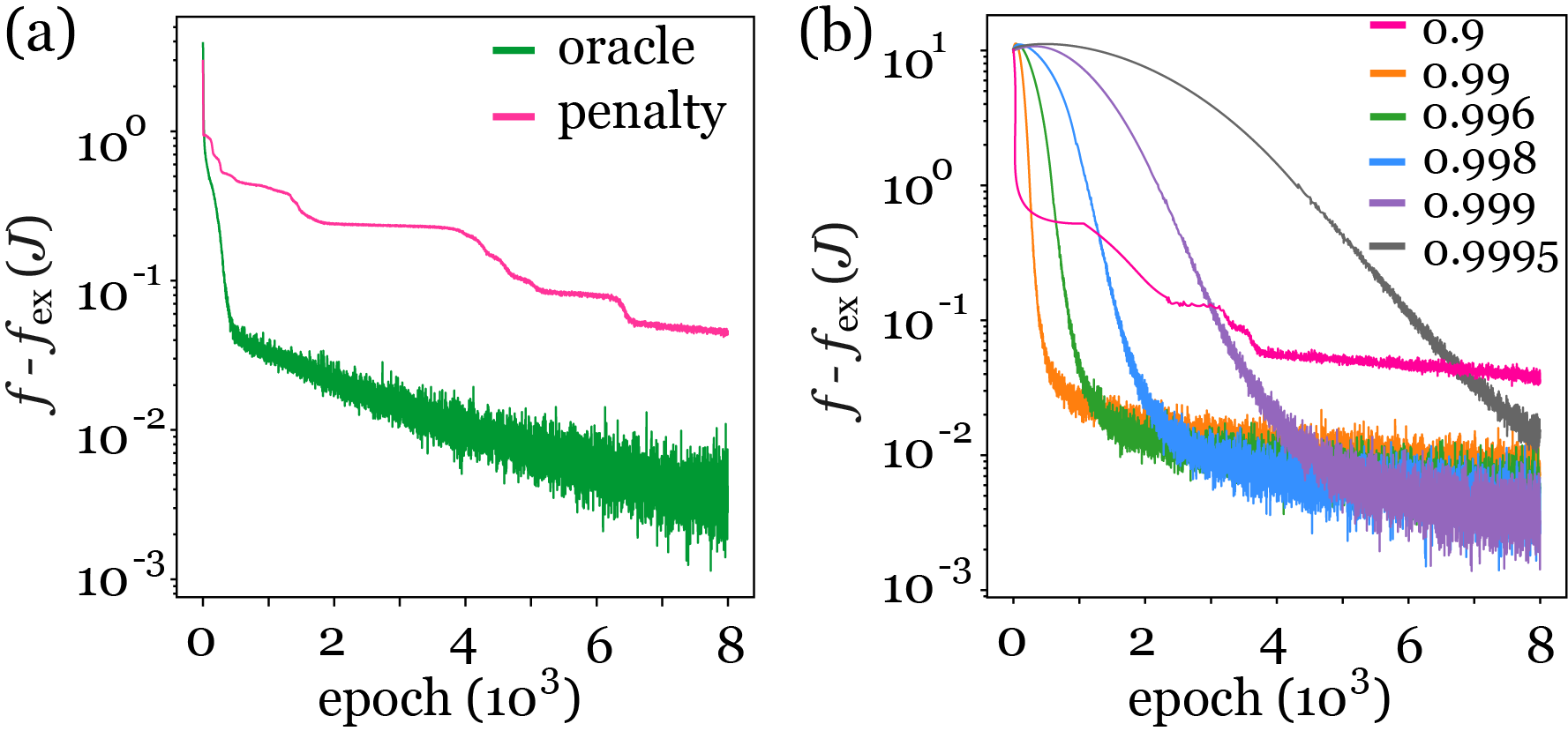}
	\caption{\label{fig:fig3}
	(a) The VAN solution of a 16-site Heisenberg chain, using oracle or energy penalty for eliminating dead ends. The difference between computed and exact free energies is plotted as a function of epoch. 
	(b) The oracle-guided VAN solution of a 32-site Heisenberg model, with different $r$ values for action rescaling.
	$\beta J=1$ and $M=8$ in both (a) and (b).
	}
\end{figure}

\section{Results}

Now we apply the oracle-guided fully visible sigmoidal belief network\citep{Frey98} to examine the thermodynamics of the quantum $XXZ$ spin chain in a few regimes, namely the antiferromagnetic ($\lambda=1$) and $XY$ ($\lambda= 1/8$) limits over a range of temperatures. In addition to the free energy as given in Eq. (\ref{eq:fe}), we also compute specific heat and magnetic susceptibility. The specific heat per spin is defined as $C_V= N^{-1}\overbar{(\partial E/\partial T)_N}$, where $E=\partial S/\partial \beta $ and over bar denotes thermal averaging by VAN. Therefore, the specific heat is computed as the following average 
\begin{equation}
	C_V=\frac{k_{\text B}\beta ^2}{N} \left(\overbar{E^2} - \overbar{E}^2- \overbar{C}\right),
\end{equation}
where $C = \partial^2S/\partial \beta^2$.
Analytical expressions for $E$ and $C$ can be easily derived and sampled using the VAN. The susceptibility  is computed by sampling the total magnetic moments, $m_z=N^{-1}\sum s_i^z$, as 
\begin{equation}
	\chi =-\left(\frac{\partial^2 f}{\partial B^2}\right)_{T,N}= \beta \left (\overbar{m_z^2}-\overbar{m_z}^2\right).
\end{equation}
The same quantities are also computed with stochastic series expansion (SSE)\citep{Sandvik99,Alet05} and exact diagonalization for calibrations.

\begin{figure}[ht]
	\includegraphics[width=85 mm, height=96.5 mm]{./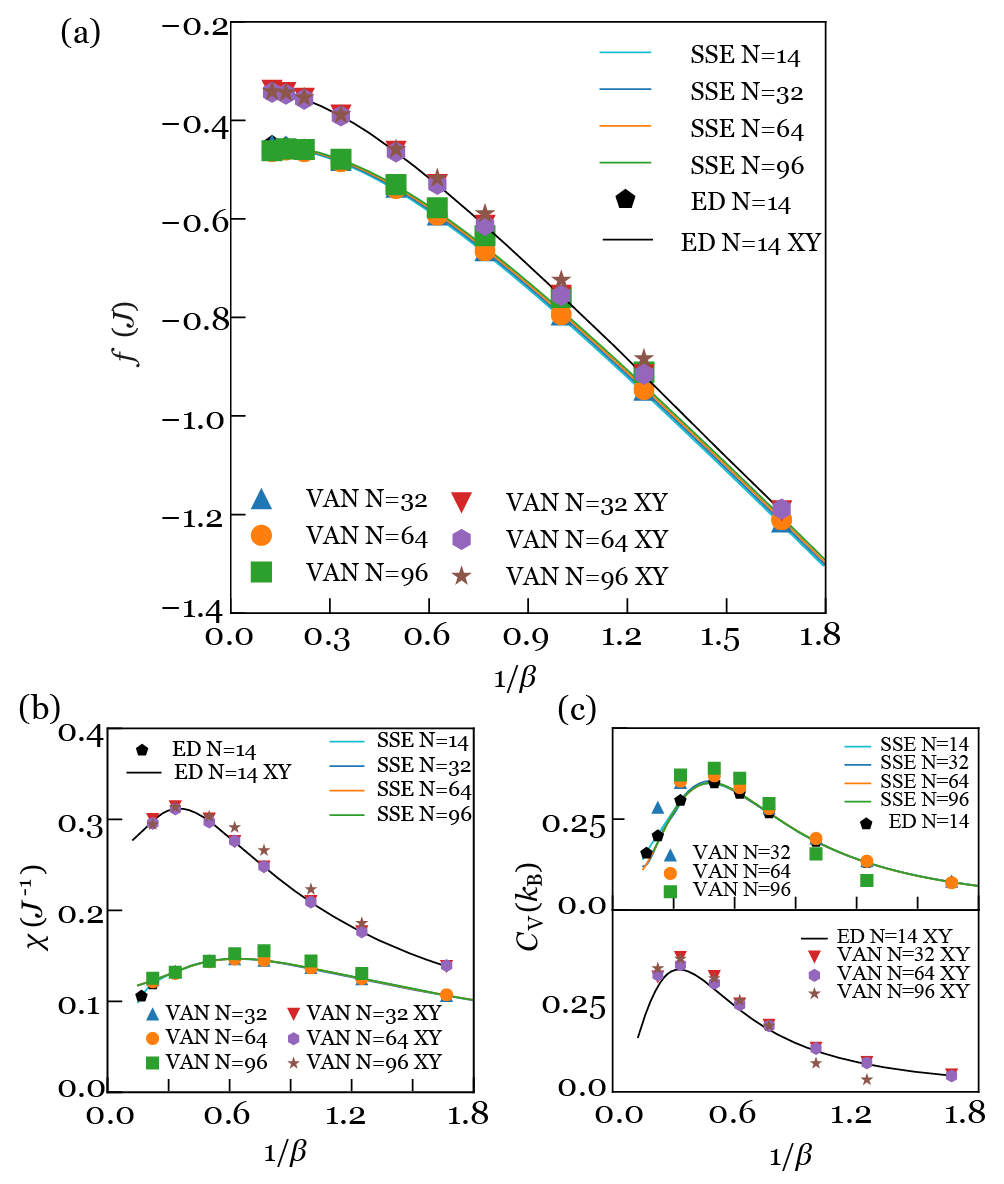}
	\caption{\label{fig:fig4} Various thermodynamic quantities computed using oracle-guided VAN, in comparison with SSE and exact diagonalization (ED) results.
	(a) Free energies of antiferromagnetic and $XY$-like (marked in legends) chains, with $M=8$ or 16 in VAN (see text). 
	(b) Susceptibilities and (c) specific heats of Heisenberg and $XY$ chains.
	}
\end{figure}

Shown in Fig. \ref{fig:fig4}(a) are the free energies of antiferromagnetic and $XY$-like chains with 32, 64 and 96 sites ($J>0$). The number of time slices is  $M=8$, except for $N=32$ and $\beta\ge 2$, where $M=16$ is used. We see that for the antiferromagnet, the VAN-computed  free energies agree with the SSE results reasonably well, only showing slight overestimation at elevated temperatures. It is also seen that the exact diagonalization results for shorter chains agree very well with SSE results regardless of chain lengths, showing little finite size effect. For the $XY$-like chains, we find that the free energies all fall very close to that from exact diagonalization of a 14-site chain. Since in this regime, the finite size effect is seen to be minimal, the computed free energies for  the $XY$ chains are also accurate. We would like to remark that even though VAN produces accurate free energies in the temperature range studied here, the performance for the antiferromagnetic Ising limit is not satisfactory, likely due to large Trotter discretization error, and very slow unbolting from localized modes exaggerating the occurrence of local excitations.

The computed susceptibilities shown in Fig. \ref{fig:fig4}(b) again  display little dependences on system sizes, and both show maxima at finite temperatures. For the antiferromagnetic case, where SSE solutions are available, the VAN-computed susceptibilities agree well with the Monte Carlo results and exact results. For the $XY$-like chains, the computed susceptibilities are considerably larger than the antiferromagnetic chains as expected, due to the floppiness of spins in this regime. In both cases, the susceptibilities for the long chain ($N=96$) show slightly larger deviations from references. The computed specific heats are shown in Fig. \ref{fig:fig4}(c), which are in general agreement with the reference values offered and show the maxima at finite temperatures, but have somewhat larger deviation than susceptibilities shown in Fig. \ref{fig:fig4}(b).  

\section{Conclusion}
In summary, 
we have constructed a variational autoregressive network, which efficiently and accurately represents the partition function of quantum spin systems, which is subsequently solved variationally. An oracle accompanying this VAN approach allows for rejection-free sampling satisfying the requisite boundary condition. Clearly, this approach to quantum statistical problem allows for highly parallelizable solutions of quantum many-body systems at finite temperatures. The oracle, based on a width-first greedy algorithm, results in significant reduction of computational cost compare to generating each spin by VAN, which can be applied to worldline generation in other methods, such as quantum Monte Carlo methods(worldline Monte Carlo,~\cite{Suzuki76, Suzuki77, Hirsch82, Evertz03} continuous time quantum Monte Carlo,\cite{Kornilovitch98, Rubtsov05} stochastic series expansion\cite{Sandvik99,Alet05}). An immediate future direction is to extend this approach to problems at higher dimensions, as well as quantum many-body problems of other types, such as Bose-Hubbard models, t-J models. Another example is the quantum dimer model, where the local constraints can be baffling to cluster-flipping quantum Monte Carlo, the VAN method may provide a viable alternative. A main limitation of VAN method is the fact that its sample generation is sequential, which is not efficient especially when a large number of spins is needed to represent a worldline configuration. A potential direction to explore is how graph neural network~\cite{Scarselli08} may offer a better representation of the configuration.

\begin{acknowledgments}
This work is supported by the National Natural Science Foundation of China (Grant No. 11725415 and No. 11934001), the Ministry of Science and Technology of China (Grant No. 2018YFA0305601 and No. 2016YFA0301004), and by the Strategic Priority Research Program of Chinese Academy of Sciences (Grant No. XDB28000000). 
\end{acknowledgments}


\appendix

\renewcommand{\thefigure}{A-\arabic{figure}}
\setcounter{figure}{0}

\renewcommand{\theequation}{A-\arabic{equation}}
\setcounter{equation}{0}


\section*{Appendix A. When to turn on oracle}

In this section, we provide the proof of the lemma used in the paper regarding the onset of unviable configurations when generating spin configuration on a spacetime lattice.  We will have to start with a  few definitions before a statement of the lemma is presented.  The proof for the lemma is comprised of two propositions.

\begin{dfn}
	A \underline{checkerboard} is a pattern formed by dividing a $N\times H$ rectangle  into $N\times H$ unit squares, which are then colored white and dark alternately in both directions. Periodic boundary condition is imposed in the horizontal direction only, so $N$ is even.
\end{dfn}

The direction along which there are $N$ squares is called horizontal direction, and the other is called vertical direction. In view of the periodic boundary condition, the checkerboard is a finite cylindrical surface and can be stereographically mapped onto an annulus, of which WLOG the inner circle is the bottom edge, and the outer circle the upper edge, as shown in Fig. \ref{fig:figA1}.

\begin{figure}[h]
	\centering
	\includegraphics[width=80
	mm]{./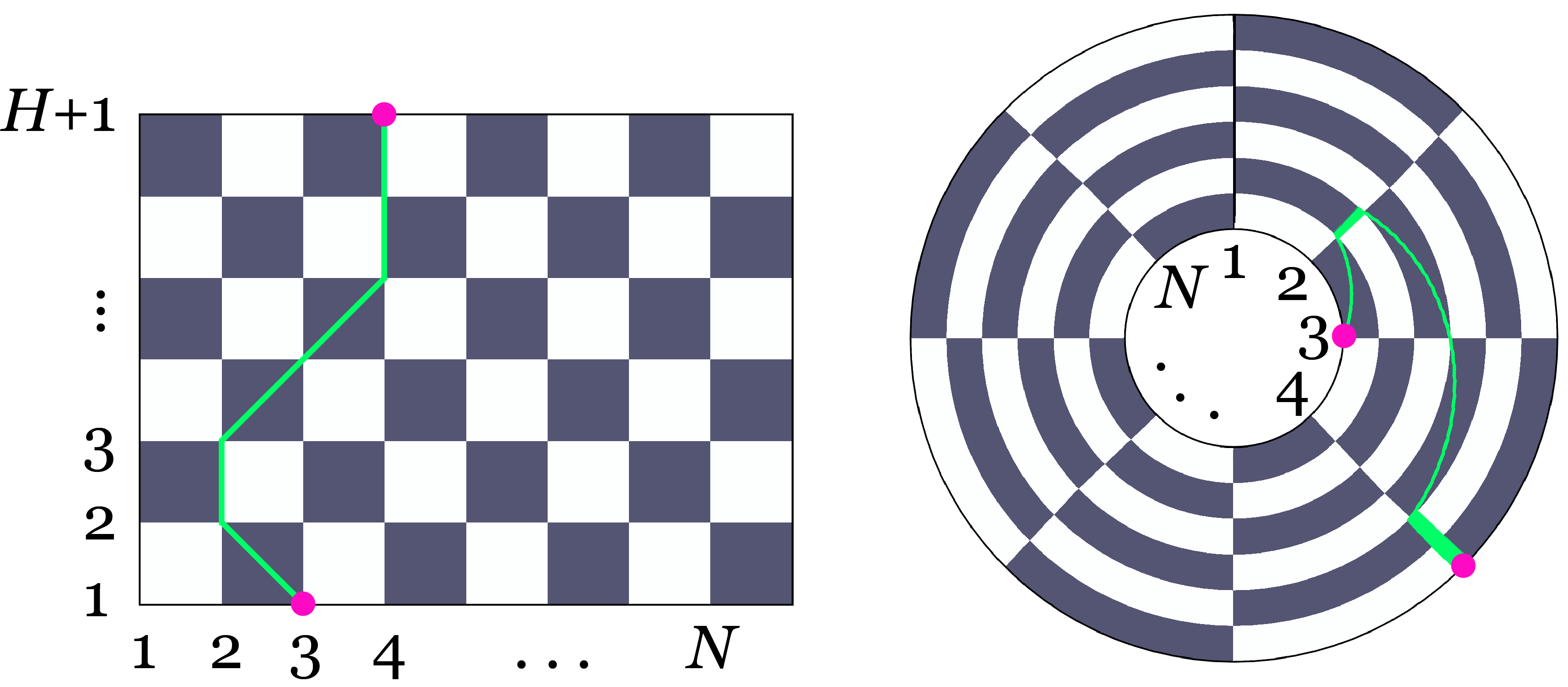} \\
	\caption{\label{fig:figA1}
	 A path on a $N\times H$ checkerboard.
	}
	
\end{figure}

\begin{dfn}
Corners of the $N\times H$ squares are called \underline{grid points}.
	 A $B\times H$ checkerboard has $N\times (H+1)$ grid points.
\end{dfn}

The locations of a grid point is specified by an ordered doublet, $(n, h)$, $n=1,\ldots,N, h=1,\ldots, H+1$. For the problem we will discuss, there are $k$ points located at distinct grid points, $\{x|x\in \mathbb Z, 1\le x\le N\}$, on the bottom edge, and $k$ points on the top edge at $\{y|y\in \mathbb Z, 1\le y\le N\}$.

\begin{dfn}
	A \underline{path} connects pair of grid points on the top and bottom edges of the checker board, and is as a collection of connected and directed line segments. Every line segment is upward directed starting and ending on grid points, and is either  vertical or a diagonal of a dark square.
\end{dfn}

When $H$ is small, a path may be be found for a pair of edge points. The question is: what is the lower bound of $H$ for a given $N$, a path can be found for a pair of edge points. What if there are more than one pair of points? The answer is Lemma 1:

\begin{lem}
	When $H\ge N/2+1$, a set of paths can always be found connecting $k \le N$ arbitrary grid points on the lower edge to $k $ arbitrary grid points on the top edge.
	\label{lem:01}
\end{lem}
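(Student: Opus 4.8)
The plan is to prove the lemma in two stages, mirroring the two-proposition structure announced above: first settle the case of a single pair of endpoints, then bootstrap to $k$ mutually non-intersecting worldlines.

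\emph{One path.} The structural fact driving everything is that the checkerboard coloring is preserved along a diagonal: if the dark square adjacent to a grid point $(n,h)$ lies to its upper right, then so does the dark square adjacent to $(n+1,h+1)$. Hence a path may take consecutive diagonal steps in one fixed lateral direction, gaining one horizontal unit per vertical step; a single vertical step toggles the parity and thereby switches which diagonal is available, while a pair of vertical steps consumes height with no lateral drift. Thus, to join a bottom grid point $x$ to a top grid point $y$ at cyclic horizontal separation $d$, one prepends at most one parity-correcting vertical step, takes $d$ diagonal steps toward $y$, and pads the remainder with verticals; this works whenever $H\ge d+1$. On a cylinder of even circumference $N$ one always has $d\le N/2$, so $H\ge N/2+1$ suffices and leaves at least one vertical step of slack. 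The path so constructed is moreover \emph{taut}: monotone in the horizontal coordinate, with no backtracking.

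\emph{$k$ paths.} Non-intersection forces the $k$ paths to share a common winding number and to realize a cyclic-shift matching between the sorted lower points $x_1<\dots<x_k$ and the sorted upper points $y_1<\dots<y_k$. The first step is to \emph{choose} this matching well: by a short averaging argument over the $k$ admissible shifts, one selects the shift together with winding zero so that every path's required horizontal displacement is at most $N/2$, hence individually realizable within $H-1$ steps by the previous stage. It then remains to realize all these taut routes simultaneously without collisions. I would proceed greedily, placing the paths one at a time: having laid down the first $j$ taut routes, one shows the remaining endpoint pairs still lie in a subregion wide enough — thanks to tautness and the winding-zero choice — that the in-region horizontal separation of every remaining pair is at most $N/2$; the next pair is then routed tautly there, spending its one slack step to idle vertically for a level whenever it would otherwise share a grid point with a neighbor, or be forced onto the second diagonal of a dark square a neighbor already uses.

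The routine parts are the parity bookkeeping of the single-path stage and the observation that concatenating valid segments at a common grid point yields a valid path. The substantive obstacle is the $k$-path stage: making the ``good matching'' step precise (that some cyclic shift really does hold every displacement to $\le N/2$), and then verifying that the greedily nested taut routes never intersect — it is precisely here that the margin in $H\ge N/2+1$, rather than $H\ge N/2$, is spent, and here that keeping track of a consistent winding number around the periodic boundary makes the full proof, though elementary, somewhat lengthy.
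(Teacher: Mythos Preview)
Your single-path stage is fine, and you are right that the $k$-path stage is where the work lies. But the plan you sketch there has a genuine gap, not merely missing details.

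Your two-step plan is: (i) pick a cyclic-shift matching with every individual displacement $\le N/2$, then (ii) route the paths greedily, spending the one spare vertical step per path to dodge collisions. Step (i) is in fact true, but not by an ``averaging argument'': averages control sums, not per-path maxima; the paper obtains it instead by a constructive rotating-ray pairing. More importantly, step (ii) fails as stated, because the criterion ``every displacement $\le N/2$'' is strictly weaker than what the construction needs. Take $N=8$, $k=4$, bottom points $\{1,2,3,4\}$, top points $\{5,6,7,8\}$. The identity shift gives every displacement equal to $N/2=4$, so it passes your test; yet with $H=5$ no non-intersecting routing of this matching exists. (The first step is frozen --- every active plaquette has both bottom corners up or both down --- so the total of $16$ diagonal moves required cannot be supplied in the remaining height.) What \emph{does} work is the shift by $2$, which splits the problem into two unidirectional blocks $\{3,4\}\to\{5,6\}$ and $\{1,2\}\to\{7,8\}$ of width $3$ each; inside such a block the paths must be built with diagonals staggered in height, and the height consumed is the \emph{block width} plus one, not the individual displacement plus one. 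Your single slack step per path does not pay for this staggering.

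This is exactly the structure the paper's proof supplies. Proposition/Definition~1 partitions the cylinder (viewed as an annulus) into balanced arcs of angular width at most $180^\circ$ --- equivalently, horizontal width at most $N/2$ --- by iteratively pairing a point with the next point on the opposite edge reached by sweeping a ray clockwise. Each arc is then cut further into \emph{unidirectional} pieces, and Proposition/Definition~2 routes the $n_\sigma$ paths inside such a piece by starting their diagonal runs at heights $h_\sigma, h_\sigma+1,\dots,h_\sigma+n_\sigma-1$; the feasibility inequality is $H\ge \text{(arc width)}+1$, which the section bound guarantees. So the missing idea in your proposal is precisely this intermediate decomposition into narrow unidirectional sections: without it, neither the choice of matching nor the height budget can be controlled.
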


The proof for Lemma 1 consists of two proposition/definitions.

\begin{propdfn}
	On each of two concentric circles lie $k$ points. N rays from the center can be drawn to partition the plane into N sections. For an arbitrary N, we can always find a \underline{partitioning} where the two arcs in each \underline{section} have identical numbers of points, and the angle made by abutting rays are less than or equal to 180~$^\circ$.
	\label{prop:01}
\end{propdfn}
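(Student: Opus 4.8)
The plan is to reduce the statement for arbitrary $N$ to the case $N=2$ and then settle that base case by a one‑dimensional sweep argument. Observe first that for $N=2$ the two section angles sum to $360^\circ$ and both must be at most $180^\circ$, so they are each exactly $180^\circ$: the two rays form a diameter. Hence the base case asks for a line through the centre such that each of the two half‑planes it determines contains as many points of the inner circle as of the outer circle.

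To prove the base case I would list the $2k$ points cyclically by angle, labelling each $I$ (inner) or $O$ (outer), so that there are $k$ labels of each kind. For a direction $\alpha$ not equal to the angle of any point, let $d(\alpha)$ be the number of $I$'s minus the number of $O$'s lying in the half‑open arc $[\alpha,\alpha+180^\circ)$. Since the complementary arc $[\alpha+180^\circ,\alpha+360^\circ)$ contains the remaining points, $d(\alpha)+d(\alpha+180^\circ)=k-k=0$, i.e. $d(\alpha+180^\circ)=-d(\alpha)$. The function $d$ is integer‑valued and changes by exactly $\pm 1$ each time one of the two endpoints $\alpha$, $\alpha+180^\circ$ sweeps past a point; hence, in passing from $d(\alpha_0)$ to $d(\alpha_0+180^\circ)=-d(\alpha_0)$ it must take the value $0$ at some admissible $\alpha^\ast$. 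Cutting the plane by the rays at $\alpha^\ast$ and $\alpha^\ast+180^\circ$ then yields two $180^\circ$ sections, each balanced, which settles $N=2$.

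For $N>2$ I would induct, enlarging a valid partitioning with $N$ sections to one with $N+1$ sections. In any section, between one bounding ray and the first of the points it contains there is a point‑free angular interval; inserting a new ray there splits that section into an empty section (trivially balanced, with zero points of each kind) and a slightly narrower copy of the original section, leaving all other sections unchanged. No section angle has increased, so all remain $\le 180^\circ$, and every balance condition is preserved. Starting from the diameter of the base case and repeating this insertion produces a valid partitioning for every $N\ge 2$.

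The step I expect to be most delicate is the base case when points coincide in angle — in particular, when two points of the same type share an angle, $d$ could jump by two and skip the value $0$. I would handle this by first perturbing the point positions to general position, running the argument, and then passing to the limit (the chosen diameter can be kept away from the finitely many offending directions), or equivalently by allowing a ray to pass through coincident points and choosing which side to assign them to so as to keep $d=0$. I would also note that the padding step deliberately produces empty sections: this is what ``for an arbitrary $N$'' is about, since for large $N$ one genuinely cannot make every section nonempty and balanced simultaneously, so the statement as written — which permits empty sections — is the appropriate one to prove.
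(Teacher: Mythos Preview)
Your argument is correct and takes a genuinely different route from the paper's own demonstration.

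The paper proceeds by a greedy pairing: as long as points remain, it locates a point whose clockwise nearest neighbour (within $180^\circ$) lies on the other circle, encloses that pair in a section bounded by rays through the two points, erases the pair, and repeats. After $k$ iterations one has $k$ thin sections, each containing exactly one inner and one outer point; overlapping or abutting sections are then merged subject to the $180^\circ$ constraint. Your approach instead attacks the extremal case $N=2$ directly, finding a single balancing diameter by the discrete intermediate-value argument on $d(\alpha)$, and then manufactures larger $N$ by padding with empty sections.

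What each buys: the paper's construction naturally yields \emph{fine} sections (one inner--outer pair apiece), which dovetails with the subsequent reduction to unidirectional sections in Proposition/Definition~2; your construction gives only two coarse half-plane sections, so more of the work of subdividing into unidirectional pieces is deferred, but the proof of Proposition~1 itself becomes shorter and conceptually cleaner. Your IVT argument also makes the existence statement transparent without the somewhat informal ``merge overlapping sections'' step in the paper. One small remark: the jump-by-two obstruction you flag arises not only when two points of the same type share an angle but also when two points of \emph{opposite} type are diametrically opposite (one enters as the other exits); your perturbation remedy covers this case as well, so nothing is lost.
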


Proposition \ref{prop:01} can be demonstrated by a constructive procedure for achieving this kind of partitioning. So long as $k\ge 1$, there must a point such that if we draw a ray through it and rotate the ray clockwise through an angle $\le180^\circ$, the first point the ray encounters lies on the other circle. We then draw rays through this pair of points to create a section, and erase this pair of points. Repeating the above process $k$ times, we will have $k$ sections each containing a pair of points located on the two circles. To obtain the final partitioning, we take unions of overlapping and neighboring sections with the constraint that the angle subtended by a final section is no greater than 180$^\circ$.

As a consequence of Proposition \ref{prop:01}, we can partition the checkerboard into non-empty sections, and in each section there are equal number of points on the top and bottom edges, and the horizontal separations between points are no greater than $N/2$.  The partitioning resulted from the above algorithm is usually not unique, and may not be the simplest; Proposition/Definition \ref{prop:01} merely asserts its existence. In $\sigma$th section, we number the points on the top and bottom edges separately, as $i_\sigma =1 \dots n_\sigma$ in increasing order from left to right. 

\begin{figure}[h]
	\centering
	\includegraphics[width=80mm]{./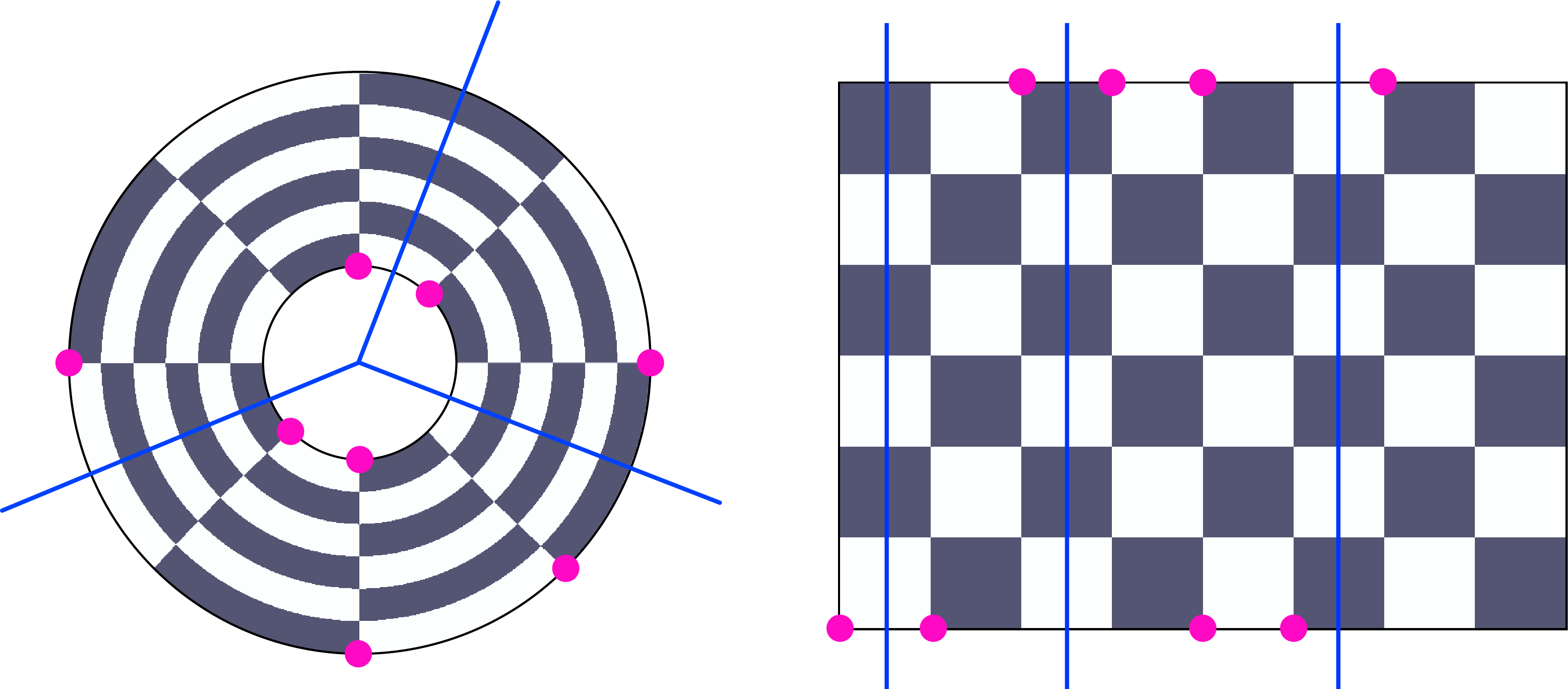} \\
	\caption{\label{fig:figA2}
	 Partition of unidirectional sections.
	}
\end{figure}

\begin{propdfn}
	In a \underline{unidirectional section} $\sigma$, $x(i_\sigma) \le y(i_\sigma)$ for all $i_\sigma$. If its width $+1\le H$,  then the points on the top and bottom edges can be pair-wise connected by nonintersecting paths.
	\label{prop:02}
\end{propdfn}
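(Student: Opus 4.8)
The plan is to prove, by induction on the number $n_\sigma$ of pairs, a mild strengthening of Proposition/Definition~\ref{prop:02}: the region may be taken to be not a rectangle but a \emph{left-closed staircase region} $\{(n,h):1\le n\le\rho(h)\}$ with $\rho$ nondecreasing and of unit increments ($0\le\rho(h+1)-\rho(h)\le 1$), with bottom points $x(1)<\dots<x(n_\sigma)\le\rho(1)$, top points $y(1)<\dots<y(n_\sigma)\le\rho(H+1)$, unidirectionality $x(i)\le y(i)$, and $(y(n_\sigma)-x(1))+1\le H$. Proposition/Definition~\ref{prop:02} is the case $\rho\equiv N$. The point of the generalization is that the region left of a freshly routed path is a staircase rather than a rectangle, so the induction only closes in the more flexible form.

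For the base case $n_\sigma=1$ I would use a parity count. Colour the square with lower-left corner $(n,h)$ dark when $n+h$ is even; then from $(n,h)$ an upward-right diagonal is available exactly when $n+h$ is even, a vertical step is always available and flips this parity, and an upward-right step preserves it. Hence, starting from $(x(1),1)$ and delaying the rightward motion as long as possible, the path hugs column $x(1)$, then --- after at most one parity-forced vertical row --- runs $y(1)-x(1)$ consecutive diagonals up to $y(1)$, then rises vertically to the top. This costs at most $1+(y(1)-x(1))\le H$ rows, and since $\rho$ is nondecreasing with unit increments one checks its column never exceeds $\rho(h)$, so it stays inside the region.

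For the inductive step I would peel off the rightmost pair $(x(n_\sigma),y(n_\sigma))$ and route it as far right as its endpoints and the wall $\rho$ permit: race rightward, inserting parity-forced vertical rows when needed, while staying at column $\le\min(\rho(h),y(n_\sigma))$, then stay at $y(n_\sigma)$. If its column sequence $c(h)$ can be kept nondecreasing with unit increments, with $c(1)=x(n_\sigma)$, $c(H+1)=y(n_\sigma)$, and $c(h)\le\rho(h)$, then the remaining $n_\sigma-1$ pairs satisfy $x(i)\le c(1)-1$ and $y(i)\le c(H+1)-1$, so they live in the staircase region with right boundary $c(h)-1$ (again nondecreasing, unit increments), whose associated width is $y(n_\sigma-1)-x(1)\le y(n_\sigma)-1-x(1)$, hence width-plus-one $\le H-1<H$; the induction hypothesis then routes them without their ever touching the peeled path. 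The width budget strictly improves at each peel, so termination is automatic.

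The hard part is the bookkeeping behind ``if its column sequence can be kept\dots'': one must show the greedy rightmost path always admits a steering that lands on $y(n_\sigma)$ \emph{exactly} at the top edge while staying monotone, of unit increments, and inside the current staircase, under the constraint that diagonals sit only on squares of the right colour --- so the wasted vertical rows are forced at specific places and must be shown never to exhaust the available slack. I expect this finite case analysis (on whether $c$ is still racing, hugging the wall, or has already arrived, and on the parities in play) to be the ``elementary but lengthy'' core; it can equivalently be organised by first subdividing the unidirectional section into pieces on which the pairs decouple, as suggested by Fig.~\ref{fig:figA2}.
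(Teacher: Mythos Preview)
Your peel-off-the-rightmost induction is a genuinely different route from the paper's direct construction, but as stated it does not close: the strengthened hypothesis for staircase regions is \emph{false}, so even a perfect execution of the case analysis you flag as ``the hard part'' cannot rescue it.

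A counterexample to your base case: take $H=5$, $n_\sigma=1$, $x(1)=2$, $y(1)=3$, and $\rho=(2,2,2,2,2,3)$. All your hypotheses are met: $\rho$ is nondecreasing with unit increments, $x(1)\le\rho(1)$, $y(1)\le\rho(H+1)$, $x(1)\le y(1)$, and $(y(1)-x(1))+1=2\le 5=H$. Yet there is no admissible path from $(2,1)$ to $(3,6)$ inside the region. Column~$3$ is accessible only at height~$6$, so the last step must land on $(3,6)$ from row~$5$; the only predecessor in the region is $(2,5)$, and the right-diagonal $(2,5)\to(3,6)$ is forbidden because $2+5$ is odd. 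Your sentence ``since $\rho$ is nondecreasing with unit increments one checks its column never exceeds $\rho(h)$'' is precisely where the argument breaks: the checkerboard parity and the parity at which $\rho$ steps can conspire so that the single rightward move you need is blocked no matter how much vertical slack remains.

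If you want to keep the inductive scheme, you must carry a parity-compatibility condition on $\rho$ through the induction --- morally, that $\rho+1$ is itself the column sequence of a greedy-rightmost path --- and then verify that the greedy peel preserves it. That is more than a bookkeeping case split; you first have to identify the correct invariant. The paper avoids the issue altogether by a non-inductive construction: it first treats \emph{contiguous} sections (the $x$'s and the $y$'s each consecutive integers) by laying down all $n_\sigma$ paths at once with a uniform staggered delay $h_\sigma+j$, reducing the whole verification to the single inequality $H-(h_\sigma+n_\sigma-1)\ge\ell_\sigma$, and then handles a general unidirectional section by comparison with an auxiliary contiguous one via $\tilde x(j)\le x(j)$ and $\tilde y(j)\ge y(j)$.
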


\begin{proof}
	We first show that in a \underline{contiguous  section} $\sigma$ where $\{x(i_\sigma)\}$  are consecutive integers and so are $\{y(i_\sigma)\}$ (which then is unidirectional), pairwise paths can be constructed. WLOG, we suppose $\ell_\sigma = y(1_\sigma) -x(1_\sigma)\ge 0$. 
	
	In the first step, a path connecting the last ($n_\sigma$th) pair of points is constructed by the procedure: starting at $(x(n_\sigma), 1)$ we draw a vertical line to the first grid point ($x(n_\sigma), h_\sigma$)  upper-right to whom a dark square lies, and draw along the diagonals to the grid point beneath $y(n_\sigma)$, and then draw a vertical line to reach the top edge. Since  $H\ge l_\sigma +1$, this path can always be drawn.
	
	Subsequently, similar procedure is performed consecutively for each pair to the left: a vertical line is drawn from $(x(n_\sigma -j), 1)$ to $(x(n_\sigma-j), h_\sigma+j)$, which then continues diagonally upward and to the right till beneath $y(n_\sigma-j)$ and then continues upward. The diagonal part of the path can be completed if
	\begin{eqnarray*}
		H-(h_\sigma+n_\sigma-1) \ge \ell_\sigma.
	\end{eqnarray*}
	Since $\ell_\sigma +n_\sigma  \le H$ and $h_\sigma = 0$ or 1, the inequality holds.
	
	For a unidirectional section that is not contiguous, paths can be constructed from those of an auxiliary contiguous section with  the same numbers of points on the top and bottom edges at $\{\tilde x(j_\sigma)\}$ and $\{\tilde y(j_\sigma)\}$, respectively, where $\tilde{x}(1)=x(1_\sigma), \tilde y(n_\sigma )=y(n_\sigma)$. Now starting at $(x(j_\sigma),1)$ a path will travel upward till it meets  auxiliary path $j_\sigma$ , then travels along with the path $j_\sigma$ until beneath $y(j_\sigma)$, and then continues vertically to $(y(j_\sigma),H+1)$. This is always possible since $\tilde x(j_\sigma) \le x(j_\sigma)$ and $\tilde y(j_\sigma) \ge y(j_\sigma).$
\end{proof}

Obviously, each non-unidirectional section can always be further divided into unidirectional sections. The proof for Lemma \ref{lem:01} is completed. Incidentally, since the paths constructed in our proof always travel  from the starting point on the lower edge toward the end point in the horizontal direction, we come to a corollary:
\begin{cor}
	If $H\ge N/2+1$, at least one greedy solution exists. 
\end{cor}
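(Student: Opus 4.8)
The plan is to recognize that no fresh construction is needed: the path-family already built in the proof of Lemma~\ref{lem:01} is itself greedy, so the corollary is a verification rather than a new theorem. First I would pin down a working definition of a \emph{greedy solution}: a family of non-intersecting paths in which every path, traversed from its lower-edge grid point toward its upper-edge grid point, has a horizontal coordinate that is a monotone function of height, changing by at most one unit per level and only ever in the direction of its target — i.e.\ it uses a diagonal of a dark square only when that diagonal decreases the horizontal distance to the endpoint, and a vertical segment otherwise, so the path never overshoots or retreats from its target. This is precisely the behaviour of the width-first greedy rule invoked in the main text for building lightcones.

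Next I would walk through the cases of the Lemma~\ref{lem:01} construction and check monotonicity step by step. By Proposition/Definition~\ref{prop:02}, every section is (or is subdivided into) a unidirectional section with $x(i_\sigma)\le y(i_\sigma)$ for all $i_\sigma$, so ``toward the target'' means uniformly ``to the right'' inside a section; and because horizontal separations within a section are $\le N/2<N$, a rightward-monotone path of at most $N/2$ steps cannot wind around the cylinder, so ``rightward'' is unambiguous. In the contiguous case, the path for the $(n_\sigma-j)$th pair goes straight up from $(x(n_\sigma-j),1)$, then strictly up-and-to-the-right along dark-square diagonals until it is directly beneath $y(n_\sigma-j)$, then straight up: its horizontal coordinate is non-decreasing throughout and never passes $y(n_\sigma-j)$, hence greedy; the feasibility of these segments for $H\ge N/2+1$ is exactly the inequality $\ell_\sigma+n_\sigma\le H$ already established in the proof. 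In the general (non-contiguous) unidirectional case, the path rises vertically to meet its auxiliary contiguous path, follows that (greedy, by the previous case) path, and peels off vertically beneath $y(j_\sigma)$ — each phase is either vertical or rightward toward the target, so the composite is still greedy. Finally, if the WLOG normalization $\ell_\sigma\ge 0$ fails, reflecting the checkerboard horizontally converts the construction into one whose paths move monotonically leftward toward their targets, which is again greedy.

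Since the rays partition the checkerboard and the per-section paths are confined to disjoint horizontal spans, assembling the greedy paths of the individual sections produces a single non-intersecting greedy family covering all $k$ pairs whenever $H\ge N/2+1$, which is the assertion of the corollary.

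I expect the only real obstacle to be the bookkeeping in the second paragraph: making the ``monotone toward the target'' property airtight at the junctions between the vertical and diagonal portions of a path and at the handover to the auxiliary path, and confirming that $H\ge N/2+1$ leaves enough vertical room for this monotone behaviour in the worst case (largest offset $\ell_\sigma$ together with the largest number of left-shifts $n_\sigma$) — but this is precisely the inequality $\ell_\sigma+n_\sigma\le H$ already verified in the proof of Lemma~\ref{lem:01}. Everything else is a direct reading-off of the existing construction.
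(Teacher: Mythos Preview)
Your proposal is correct and takes essentially the same approach as the paper: the paper's entire justification is the single observation that ``the paths constructed in our proof always travel from the starting point on the lower edge toward the end point in the horizontal direction,'' which is exactly the monotonicity property you verify case by case. Your version is considerably more careful about the bookkeeping (junctions, the reflected case, assembling sections) than the paper's one-line remark, but the underlying idea is identical.
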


\section*{Appendix B. Exact diagonalization}

We exactly diagonalize the many-body Hamiltonian of short spin chains to provide a reference for the results from VAN. The many-body Hamiltonian is constructed on a basis of spin configurations $\ket{\{s^z\}}=\Pi_{j}\ket{s^z_{j}}$.  Rewriting Eq. (\ref{eq:xxz}) as
\begin{equation}
H = \sum_{j} \lambda s^{z}_j s^{z}_{j+1}+\frac{1}{2}(s^{+}_js^{-}_{j+1}+H.c.)
\end{equation}
where $s^{\pm}_j=s^x_j\pm is^y_j$, and $B=0$.

For example, the basis of a two-spin system can be chosen as $\ket{\downarrow\downarrow}, \ket{\uparrow\downarrow}, \ket{\downarrow\uparrow}, \ket{\uparrow\uparrow}$,and the hamiltonian matrix will be 
\begin{equation}
H_{N=2}=J\left(
\begin{array}{cccc}
\lambda/4 & 0 & 0 & 0 \\
0 & -\lambda/4 & 1/2 & 0 \\
0 & 1/2 & -\lambda/4 & 0 \\
0 & 0 & 0 & \lambda/4
\end{array}
\right)
\label{eq:hmatrix_two}
\end{equation}

We can see that in (\ref{eq:hmatrix_two}) the Hamiltonian matrix is block diagonal, which is a natural result from the conservation of total spins. In general, the N-spins hamiltonian has $N+1$ blocks $H_{N,n_{\uparrow}}$, where $n_{\uparrow}=0,1,\cdots,N$ is the number of up-spins. Computing the eigenvalues  $E^{(i)}_{N,n_{\uparrow}}$  of $H_{N,n_{\uparrow}}$ using \texttt{numpy} offered in standard python, where $i=1,2,\cdots,\frac{N!}{n_{\uparrow}!(N-n_{\uparrow})!}$,  we obtain the partition function as
\begin{equation}
Z = Tr[e^{-\beta H_N}]=\sum_{n_{\uparrow}}\sum_i e^{-\beta E^{(i)}_{N,n_{\uparrow}}}.
\end{equation}

%

%

\end{document}
%